\documentclass[11pt, a4paper]{article}
\usepackage[left=25mm,right=25mm,top=25mm,bottom=30mm]{geometry}
\usepackage{framed,latexsym,amssymb,amsmath,amsthm,float,cite,complexity,setspace,color,graphicx,lineno,hyperref,url,lmodern}
\usepackage[shortlabels]{enumitem}
\usepackage[ddmmyyyy]{datetime}
\usepackage[noline,noend]{algorithm2e}
\usepackage{tikz,pgfplots}
\pgfplotsset{compat=1.18}

\usepackage{thmtools,thm-restate}
\usepackage{regexpatch}
\makeatletter
\xpatchcmd\thmt@restatable{%
\csname #2\@xa\endcsname\ifx\@nx#1\@nx\else[{#1}]\fi
}{%
\ifthmt@thisistheone
\csname #2\@xa\endcsname\ifx\@nx#1\@nx\else[{#1}]\fi
\else
\csname #2\@xa\endcsname[{restated}]
\fi}{}{}
\makeatother

\newtheorem{theorem}{Theorem}
\newtheorem{lemma}[theorem]{Lemma}
\newtheorem{corollary}[theorem]{Corollary}
\newtheorem{claim}{Claim}
\newtheorem{proposition}[theorem]{Proposition}
\theoremstyle{definition}

\floatstyle{boxed}
\newfloat{problem}{tbhH}{prb}

\tikzset{
	dot/.style = {circle, fill, minimum size=#1,
		inner sep=0pt, outer sep=0pt},
	dot/.default = 5pt
}
\usetikzlibrary{calc}
\usetikzlibrary{tikzmark}

\newcommand{\calH}{\mathcal{H}}
\newcommand{\calL}{\mathcal{L}}
\newcommand{\calM}{\mathcal{M}}
\newcommand{\calR}{\mathcal{R}}
\newcommand{\calF}{\mathcal{F}}
\newcommand{\calO}{\mathcal{O}}

\begin{document}
\onehalfspace

\title{A Faster Algorithm for Independent Cut}
\author{Vsevolod Chernyshev$^1$ \and Johannes Rauch$^1$ \and Dieter Rautenbach$^1$ \and Liliia Redina$^2$}
\date{}
\maketitle
\vspace{-10mm}
\begin{center}
{\small 
$^1$ Institute of Optimization and Operations Research, Ulm University, Ulm, Germany\\
$^2$ HSE University, Moscow, Russia\\
\texttt{\{vsevolod.chernyshev, johannes.rauch, dieter.rautenbach\}@uni-ulm.de}, \texttt{liliiaredina@gmail.com}
}
\end{center}

\begin{abstract}
\noindent
The previously fastest algorithm for deciding the existence of an independent cut had a runtime of $\calO^*(1.4423^n)$, where $n$ is the order of the input graph.
We improve this to $\calO^*(1.4143^n)$.
In fact, we prove a runtime of $\calO^*\left( 2^{(\frac{1}{2}-\alpha_\Delta)n} \right)$ on graphs of order $n$ and maximum degree at most $\Delta$, where $\alpha_\Delta=\frac{1}{2+4\lfloor \frac{\Delta}{2} \rfloor}$.
Furthermore, we show that the problem is fixed-parameter tractable on graphs of order $n$ and minimum degree at least $\beta n$ for some $\beta > \frac{1}{2}$, where $\beta$ is the parameter.
\end{abstract}

\section{Introduction}
We consider finite, simple, and undirected graphs
and use standard terminology and notation.
Let $G$ be a graph and let $S$ be a set of vertices of $G$.
The set $S$ is \emph{independent} in $G$ 
if no edge of $G$ connects two vertices from $S$.
The set $S$ is a \emph{cut} of $G$ if $G-S$ is disconnected.
The set $S$ is an \emph{independent cut} of $G$ if $S$ is both independent in $G$ and a cut of $G$.
We are interested in the following algorithmic problem.

\begin{problem}[H]
\noindent
\textsc{Independent Cut} (IC)\\
\textit{Instance:} A graph $G$.\\
\textit{Question:} Does $G$ have an independent cut?
\end{problem}

IC is \NP-complete even when the input is restricted to 5-regular line graphs of bipartite graphs~\cite{le2001onstablecutsets} or $K_4$-free planar graphs with maximum degree~5~\cite{le2008onstablecutsets}.
Since line graphs of bipartite graphs are perfect~\cite{brandstadt1999graphclasses}, 
IC remains \NP-complete on perfect graphs.
Numerous polynomial-time solvable cases have been identified~\cite{brandstadt2000onstablecutsets,le2001onstablecutsets,le2008onstablecutsets,chen2002note,pfender2013extremal,rauch2024revisitingextremalgraphshaving}.
Chen and Yu~\cite{chen2002note} showed that 
graphs of order $n$ and size at most $2n-4$
have an independent cut that can be found in polynomial time.
More polynomial-time solvable cases follow from the study of IC in the realm of parameterized complexity~\cite{rauch2025exactandparam,kratsch2025polynomialkernel}.
For example, IC is solvable in polynomial time on $P_5$-free graphs~\cite{rauch2025exactandparam}.

The \emph{order} of a graph $G$ is its number of vertices and the \emph{size} of $G$ is its number of edges.
The previously fastest known algorithm solving IC on a graph $G$ of order $n$ was due to Rauch et al.~\cite{rauch2025exactandparam} and had a runtime of $\calO^*\left(\sqrt[3]{3}^n\right)$.
Here, the $\calO^*$-notation suppresses polynomial factors in the $\calO$-notation.
It is based on the enumeration of all maximal independent sets, of which there are $\calO\left(\sqrt[3]{3}^n\right)$ many~\cite{moon1965oncliquesingraphs} and which can be generated with polynomial delay~\cite{johnson1988ongeneratingallmaximal}.

\paragraph{Results}
We present an algorithm that decides the IC on $G$ with an improved runtime of $\calO^*\left(\sqrt{2}^n\right)$.

Given an instance $G$ of IC, we may assume that $G$ is 2-connected; otherwise, $G$ has an independent cut.
Furthermore, we may assume that 
every vertex $u$ of $G$ lies in a triangle; otherwise, 
the neighborhood $N_G(u)$ of $u$ is an independent cut of $G$ 
since $G$ is 2-connected.
We call such an instance $G$ of IC \emph{non-trivial}.
Let $H$ be a graph.
We say that $H$ is \emph{solid} if it has order at least~3 and no independent cut.
For example, every triangle of $G$ is a solid subgraph of $G$.
For every independent set $S$ in $G$, 
each solid subgraph $H$ of $G$ is contained in 
$G[V(K) \cup S]$ for some component $K$ of $G-S$.
In fact, if $S$ is an independent set that is contained in the vertex set of the solid subgraph $H$, 
then $H-S$ is connected and has at least two vertices.
This implies that, if $H$ and $H'$ are solid subgraphs of $G$ and $H\cap H'$ contains an edge,
then $H\cup H'$ is a solid subgraph of $G$.
For a set $\calH$ of graphs, let $V(\calH) = \bigcup_{H \in\calH} V(H)$.
If the graphs in $\mathcal{H}$ are subgraphs of $G$, we say that $\mathcal{H}$ \emph{covers} $G$ if $V(G) = V(\calH)$, and $\calH$ \emph{quasi-covers} $G$ if $V(\mathcal{H})$ contains at least two vertices of every triangle of $G$.

The main idea of the algorithm is to guess a partition of a quasi-covering set of solid subgraphs of $G$ and to determine whether $G$ has a corresponding independent cut in polynomial time.
The precise statement is Theorem~\ref{thm:quasi-cover-algo}, 
which is our main result.

\begin{theorem}[restate=TheoremQuasicoverAlgorithm]\label{thm:quasi-cover-algo}
Given a non-trivial instance $G$ of order $n$ of \textup{IC}, and a quasi-covering set $\calH$ of solid subgraphs of $G$, we can decide in $2^{|\calH|}n^{\calO(1)}$ time whether $G$ has an independent cut or not.
\end{theorem}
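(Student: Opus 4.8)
The plan is to enumerate all $2^{|\calH|}$ ways of splitting $\calH$ into two parts $\calH_A$ and $\calH_B$, and for every such split with both parts non-empty to decide in polynomial time whether there is a colouring $\chi\colon V(G)\to\{a,s,b\}$ such that $\chi^{-1}(s)$ is independent, there is no edge between $\chi^{-1}(a)$ and $\chi^{-1}(b)$, both $\chi^{-1}(a)$ and $\chi^{-1}(b)$ are non-empty, $V(H)\subseteq\chi^{-1}(a)\cup\chi^{-1}(s)$ for all $H\in\calH_A$, and $V(H)\subseteq\chi^{-1}(b)\cup\chi^{-1}(s)$ for all $H\in\calH_B$; the idea is that $\chi^{-1}(s)$ should be the cut and $\chi^{-1}(a),\chi^{-1}(b)$ its two sides, consistently assigning each $H\in\calH$ to a side. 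The algorithm answers ``yes'' iff some split admits such a $\chi$ (when $|\calH|\le 1$ no split qualifies, which is consistent since then $G$ has no independent cut). For the equivalence with ``$G$ has an independent cut'' the only non-obvious direction starts from an independent cut $S$: since $\calH$ quasi-covers $G$ and every vertex lies in a triangle, every component of $G-S$ meets $V(\calH)$ --- otherwise some vertex of that component would have both of its triangle-mates in $V(\calH)$, hence, being adjacent to it, in $S$, contradicting the independence of $S$. By solidity each $H\in\calH$ has $V(H)\setminus S$ connected, so contained in a single component, and the previous sentence shows every component of $G-S$ contains $V(H)\setminus S$ for some $H\in\calH$; grouping the (at least two) components into two non-empty classes thus induces a split with both parts non-empty, whose associated colouring is as required. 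The converse is immediate.

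Now fix a split $(\calH_A,\calH_B)$ with both parts non-empty and set $S_0=V(\calH_A)\cap V(\calH_B)$, $A_0=V(\calH_A)\setminus V(\calH_B)$, $B_0=V(\calH_B)\setminus V(\calH_A)$ and $R=V(G)\setminus V(\calH)$. Any colouring of the required form must send $S_0$ to $s$, $A_0$ into $\{a,s\}$ and $B_0$ into $\{b,s\}$; what remains is to constrain the vertices of $R$, and here the structure of non-trivial instances is essential. Each $v\in R$ lies in a triangle $\{v,x,y\}$ of $G$, and as $v\notin V(\calH)$ while $\calH$ quasi-covers $G$, we get $x,y\in V(\calH)$ with $xy\in E(G)$. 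Examining this triangle shows that any colouring satisfying the restrictions above and the two edge conditions must have $\chi(v)\ne b$ if $x,y\in A_0\cup S_0$, $\chi(v)\ne a$ if $x,y\in B_0\cup S_0$, and $\chi(v)\ne s$ otherwise: in each case the forbidden option would force two adjacent vertices to get colour $s$, or the colours $a$ and $b$. Hence we may shrink the colour list of every $v\in R$ to one of $\{a,s\}$, $\{b,s\}$, $\{a,b\}$ without losing a valid colouring.

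After this shrinking every vertex has at most two admissible colours, so I would cast the search for $\chi$ as a 2-SAT instance: one Boolean variable per vertex choosing one of its (at most two) colours, a unit clause for every vertex with a forced colour, and for each edge the conditions ``not both $s$'' and ``not one endpoint $a$ and the other $b$'', each a clause of width at most two over the two endpoint variables (for instance, an edge between $A_0$ and $B_0$ forces exactly one endpoint to $s$). An unsatisfiable instance --- e.g.\ when two vertices of $S_0$ are adjacent --- correctly reports that no colouring of the required form exists. Finally, the two remaining requirements are automatic once both parts of the split are non-empty: for $H\in\calH_A$ the $s$-coloured vertices of $V(H)$ form an independent subset of $H$, so by solidity a connected subgraph on at least two vertices survives their deletion, and those vertices must receive colour $a$, so $\chi^{-1}(a)\ne\emptyset$; symmetrically $\chi^{-1}(b)\ne\emptyset$ using $\calH_B$. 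Thus a satisfying assignment yields a colouring of the required form, hence an independent cut, and any such colouring satisfies the instance; solving one 2-SAT instance per split gives the claimed $2^{|\calH|}n^{\calO(1)}$ running time.

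The step I expect to be the crux is the colour-list shrinking for the vertices of $R$: without it the colouring problem is essentially a \textsc{List Homomorphism} question (to the path $a$--$s$--$b$ with loops added at $a$ and $b$) that can encode ``stable separator''-type problems and is not polynomial-time solvable in general, so the argument genuinely relies on combining solidity, the quasi-covering property, and the fact that every vertex lies in a triangle --- the same ingredients that let the forward direction above produce a split with both parts non-empty.
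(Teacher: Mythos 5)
Your proof is correct, and it follows the same overall strategy as the paper: guess a two-part split of $\calH$, use the quasi-covering and triangle structure to shrink the colour possibilities of every vertex outside $V(\calH)$ to at most two, and decide the remaining separation problem by 2-SAT, for a total of $2^{|\calH|}n^{\calO(1)}$ time. The ingredients you identify as essential (solidity of the $H\in\calH$, the quasi-covering property, and every vertex lying in a triangle) are precisely the ones the paper uses, including your observation that solidity forces each $V(H)\setminus S$ into a single component (this is what makes the forward direction produce a non-trivial split) and your argument that solidity guarantees $\chi^{-1}(a),\chi^{-1}(b)\neq\emptyset$.

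The one genuine difference is how the polynomial-time subproblem is encoded. You introduce one Boolean variable per vertex of $G$ (choosing between its two remaining colours) and add the constraints ``not both $s$'' and ``not $a$--$b$'' per edge. The paper instead first absorbs the fixed triangles into an enlarged family $(\calL,\calR,\calM)$, builds an auxiliary bipartite graph $G'$ on $L=V(\calL)\setminus V(\calR)$ and $R=V(\calR)\setminus V(\calL)$, and introduces one Boolean variable per connected component $K_i$ of $G'$, relying on the (easy but non-trivial) fact that for any independent cut respecting the split, either $L\cap K_i$ or $R\cap K_i$ is wholly contained in the cut; paths through $M=V(\calM)\setminus(V(\calL)\cup V(\calR))$ are then handled by clauses indexed by components of $G[M]$. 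Your per-vertex encoding is a bit more direct and self-contained — it avoids the component-variable well-definedness claim and the separate treatment of $M$, because edges between two $\{a,b\}$-vertices automatically force them onto the same side. Both encodings are logically equivalent after 2-SAT propagation, and both give polynomial work per split, so the difference is presentational rather than substantive.
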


\noindent
We postpone all proofs to Section~\ref{sec:proofs}.
Of course, Theorem~\ref{thm:quasi-cover-algo} alone is not enough, since we need a sufficiently small quasi-covering set of solid subgraphs.
This is the statement of Lemma~\ref{lem:quasi-cover-sparse}.

\begin{lemma}[restate=LemmaQuasicoverSparse]\label{lem:quasi-cover-sparse}
Given a graph $G$ of order $n$ and maximum degree at most $\Delta$, 
we can determine in polynomial time 
a quasi-covering set ${\cal H}$ of triangles of $G$ with 
$|{\cal H}|\leq \left( \frac{1}{2} - \alpha_\Delta \right)n$,
where $\alpha_\Delta = \frac{1}{2+4\lfloor \frac{\Delta}{2} \rfloor}$.
\end{lemma}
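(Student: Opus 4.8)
The plan is to return $\calH = M \cup \mathcal{B}$, constructed as follows. Let $M$ be a collection of pairwise vertex-disjoint triangles of $G$ that is \emph{locally optimal}, meaning it admits neither the move ``add to $M$ a triangle of $G$ vertex-disjoint from all triangles currently in $M$'' nor the move ``delete one triangle from $M$ and add two triangles so that the collection remains a set of pairwise vertex-disjoint triangles''. Starting from the empty collection and applying such moves greedily yields such an $M$ in polynomial time, since each move increases $|M|$ by one and hence is applied at most $n/3$ times. Put $R = V(G)\setminus V(M)$. By local optimality $G[R]$ contains no triangle, so every edge $xy$ of $G$ with $x,y\in R$ that lies in a triangle of $G$ has all of its common neighbours in $V(M)$ (a common neighbour in $R$ would span a triangle inside $R$). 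Let $F$ be the graph on vertex set $R$ in which $x$ and $y$ are adjacent precisely when $xy$ is an edge of $G$ that lies in a triangle of $G$.

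Next I would fix a maximal matching $N$ in $F$; for each edge $xy\in N$ choose a common neighbour $z_{xy}\in V(M)$ of $x$ and $y$, and place the triangle $\{x,y,z_{xy}\}$ into $\mathcal{B}$. Then $|\mathcal{B}|=|N|$ and $V(\mathcal{B})\subseteq V(M)\cup V(N)$, where $V(N)$ is the set of endpoints of edges of $N$. For the quasi-cover property, let $\sigma$ be any triangle of $G$. If $\sigma$ has at most one vertex in $V(M)$ then, since $G[R]$ is triangle-free, it has exactly one vertex $w\in V(M)$ while its two other vertices $u,v$ lie in $R$; then $uv$ is an edge of $F$, and as the endpoints of a maximal matching form a vertex cover, at least one of $u,v$ lies in $V(N)\subseteq V(\calH)$ --- together with $w$ this puts two vertices of $\sigma$ in $V(\calH)$. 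Otherwise $\sigma$ has two vertices in $V(M)\subseteq V(\calH)$. Hence $\calH$ quasi-covers $G$.

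It remains to bound $|\calH|=|M|+|N|$. As the triangles of $M$ are pairwise disjoint and $V(N)\subseteq R$ is spanned by a matching, $3|M|+2|N|=|V(M)|+|V(N)|\le n$. The crucial inequality is $|N|\le\bigl(\lfloor\Delta/2\rfloor-1\bigr)|M|$. To prove it, map each edge $e=xy\in N$ to the triangle $T(e)\in M$ containing the chosen vertex $z_{xy}$; since $T(\cdot)$ is a function, the preimages of the triangles in $M$ partition $N$, so it suffices to show that every $T=\{p,q,r\}\in M$ has at most $\lfloor\Delta/2\rfloor-1$ preimages. Were two preimages $e=xy$ and $e'=x'y'$ to have their chosen vertices at two \emph{distinct} vertices of $T$, say $z_{xy}=p$ and $z_{x'y'}=q$, then $\{p,x,y\}$ and $\{q,x',y'\}$ would be triangles of $G$; they are vertex-disjoint from each other (as $N$ is a matching, $e$ and $e'$ are disjoint, $p\neq q$, and $p,q\in V(M)$ while $x,y,x',y'\in R$) and from every triangle of $M\setminus\{T\}$, so replacing $T$ in $M$ by these two would be a legal move, contradicting local optimality. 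Thus all preimages of $T$ use one fixed vertex of $T$, say $p$, as chosen vertex; their $2$-element endpoint sets lie in $N(p)\cap R$, which has size at most $\deg_G(p)-2\le\Delta-2$, and being pairwise disjoint there are at most $\lfloor(\Delta-2)/2\rfloor=\lfloor\Delta/2\rfloor-1$ of them. This proves $|N|\le(\lfloor\Delta/2\rfloor-1)|M|$. Writing $d=\lfloor\Delta/2\rfloor$, this gives $(2d+1)\bigl(|M|+|N|\bigr)\le d\bigl(3|M|+2|N|\bigr)\le dn$, hence $|\calH|=|M|+|N|\le\frac{d}{2d+1}n$; and $\frac{d}{2d+1}=\frac12-\frac{1}{2(2d+1)}=\frac12-\frac{1}{2+4d}=\frac12-\alpha_\Delta$, which is the asserted bound.

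The step I expect to be the real obstacle is the inequality $|N|\le(\lfloor\Delta/2\rfloor-1)|M|$. Charging each edge of $N$ only to a \emph{vertex} of $V(M)$ yields a bound worse by a factor of $3$; the ``delete one, add two'' exchange built into the definition of $M$ is precisely what forces all edges of $N$ charged to a given triangle of $M$ to attach at one single vertex of it, and so removes that factor. Everything above --- the greedy local search for $M$, the greedy maximal matching $N$ in $F$, and picking one common neighbour per edge of $N$ --- runs in polynomial time.
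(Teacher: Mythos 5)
Your proof is correct, and it follows a genuinely different route from the paper's. The paper grows a maximal collection $\mathcal{W}$ of vertex-disjoint \emph{windmills} (a $p$-windmill is a universal vertex plus $p$ disjoint blades), using the exchange rule ``if a triangle $T$ meets $\mathcal{W}$ in exactly one vertex of some $W\in\mathcal{W}$ and $W-V(T)$ still contains a triangle $T'$, replace $W$ by $T,T'$ and re-maximize,'' and then outputs the $\sum p_i$ triangles of the windmills, counting via $n\ge\sum(1+2p_i)$ and $p_i\le\lfloor\Delta/2\rfloor$. You instead decouple the construction into two standard pieces: a triangle packing $M$ that is locally optimal under an ``add one'' and a ``swap one for two'' move, and a maximal matching $N$ in the auxiliary graph $F$ on $R=V(G)\setminus V(M)$ of triangle-bearing edges; the cover is $M$ plus one triangle per edge of $N$. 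Your key lemma $|N|\le(\lfloor\Delta/2\rfloor-1)|M|$ plays the role of $p_i\le\lfloor\Delta/2\rfloor$, and the argument that forces all edges of $N$ charged to a fixed $T\in M$ to attach at a single vertex of $T$ is exactly where the ``swap one for two'' local optimality is used; in fact it shows that $T$ together with its charged triangles is a windmill, so your construction implicitly rediscovers the paper's windmill structure. What your version buys is modularity and a transparent termination/runtime argument (every move raises $|M|$, so $\le n/3$ iterations), whereas the paper's rule replaces a $p$-windmill by two triangles before re-maximizing and its polynomial termination is asserted but less immediate; what the paper's version buys is a single, compact structural object (a maximal windmill collection) with a one-line counting step. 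Both yield the same bound $\frac{d}{2d+1}n=\bigl(\tfrac12-\alpha_\Delta\bigr)n$ with $d=\lfloor\Delta/2\rfloor$, both output triangles, and both run in polynomial time, so your proposal stands as a valid alternative proof.
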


\noindent
Note that $\alpha_\Delta > 0$ and $\alpha_\Delta \rightarrow 0$ as $\Delta \rightarrow \infty$.
Theorem~\ref{thm:quasi-cover-algo} and Lemma~\ref{lem:quasi-cover-sparse} imply Corollary~\ref{cor:algo}, which contains the $\calO^*\left( \sqrt{2}^n \right)$ time algorithm as a special case.

\begin{corollary}[restate=CorollaryAlgorithm]\label{cor:algo}
Given a graph $G$ of order $n$ and maximum degree at most $\Delta$,
we can decide \textup{IC} on $G$ in $\calO^*\left(2^{\left( \frac{1}{2}-\alpha_\Delta \right)n}\right)$ time, 
where $\alpha_\Delta$ is as in Lemma~\ref{lem:quasi-cover-sparse}.
In particular, we can decide \textup{IC} on any graph of order $n$ in $\calO^*\left(\sqrt{2}^n\right)$ time.
\end{corollary}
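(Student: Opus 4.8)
The plan is to obtain Corollary~\ref{cor:algo} as a direct consequence of Theorem~\ref{thm:quasi-cover-algo} and Lemma~\ref{lem:quasi-cover-sparse}, after a cheap preprocessing step. First I would reduce to a non-trivial instance. Given $G$, I would check in polynomial time whether $G$ is $2$-connected and whether every vertex of $G$ lies in a triangle. As recalled in the discussion preceding Theorem~\ref{thm:quasi-cover-algo}, if $G$ is not $2$-connected then it has an independent cut (the empty set if $G$ is disconnected, and $\{v\}$ for a cut vertex $v$ otherwise), and if some vertex $u$ of $G$ lies in no triangle then, since $G$ is $2$-connected, $N_G(u)$ is an independent cut of $G$; in either case the algorithm outputs ``yes''. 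Hence from now on we may assume that $G$ is a non-trivial instance.

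Next I would invoke Lemma~\ref{lem:quasi-cover-sparse} on $G$, which has order $n$ and maximum degree at most $\Delta$, to compute in polynomial time a quasi-covering set $\calH$ of triangles of $G$ with $|\calH| \leq \left(\tfrac{1}{2} - \alpha_\Delta\right)n$. A triangle has order $3$, and each of its independent sets consists of at most one vertex, so it has no independent cut; thus every triangle is solid, and $\calH$ is a quasi-covering set of solid subgraphs of $G$. Feeding $G$ and $\calH$ into Theorem~\ref{thm:quasi-cover-algo} then decides IC on $G$ in time $2^{|\calH|}n^{\calO(1)} \leq 2^{(\frac{1}{2}-\alpha_\Delta)n}n^{\calO(1)} = \calO^*\!\left(2^{(\frac{1}{2}-\alpha_\Delta)n}\right)$, where the polynomial overhead of the preprocessing and of Lemma~\ref{lem:quasi-cover-sparse} is absorbed into the $\calO^*$-notation. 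This gives the first assertion.

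For the ``in particular'' part, observe that an arbitrary graph of order $n$ has maximum degree at most $\Delta := n-1$, so the bound just proved applies with this choice of $\Delta$ (instances with $n$ below an absolute constant can be decided by brute force in $\calO(1)$ time). Since $\lfloor \Delta/2\rfloor \geq 0$, we have $\alpha_\Delta = \frac{1}{2+4\lfloor\Delta/2\rfloor} > 0$, and therefore $2^{(\frac{1}{2}-\alpha_\Delta)n} \leq 2^{n/2} = \sqrt{2}^{\,n}$, which yields the claimed $\calO^*\!\left(\sqrt{2}^{\,n}\right)$ running time.

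I do not expect a genuine obstacle here: all of the algorithmic substance is already contained in Theorem~\ref{thm:quasi-cover-algo} and Lemma~\ref{lem:quasi-cover-sparse}, and the corollary is merely their composition. The only points requiring (routine) care are the correctness of the reduction to a non-trivial instance, which the introduction already justifies, and the observation that triangles are solid subgraphs, so that the set produced by Lemma~\ref{lem:quasi-cover-sparse} is a legitimate input to Theorem~\ref{thm:quasi-cover-algo}.
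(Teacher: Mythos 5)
Your proof is correct and takes essentially the same approach as the paper: reduce to a non-trivial instance in polynomial time, invoke Lemma~\ref{lem:quasi-cover-sparse} to obtain the quasi-covering set of triangles (which are solid), and plug this into Theorem~\ref{thm:quasi-cover-algo}. The paper's own proof is simply the one-sentence version of this composition; your proposal spells out the same steps in more detail, including the observation that triangles are solid subgraphs and the derivation of the $\sqrt{2}^n$ bound from $\alpha_\Delta>0$.
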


\noindent
Since graphs of order $n$ and size at most $2n-4$ have an independent cut that can be found in polynomial time~\cite{chen2002note}, the statement of Corollary~\ref{cor:algo} becomes interesting only for $\Delta \geq 4$.
For $\Delta \in \{4,5\}$, we have $\frac{1}{2} - \alpha_\Delta = \frac{2}{5}$.
Hence, Corollary~\ref{cor:algo} shows that we can decide IC on graphs of $n$ and maximum degree at most $\Delta \in \{4,5\}$ in $\calO^*\left(2^{\frac{2}{5}n}\right)$ time.
Note that $2^\frac{2}{5} < 1.3196$.

As a complementary result, 
we observe that known constructions and arguments imply 
an asymptotic lower bound 
based on the Exponential Time Hypothesis (ETH) 
by Impagliazzo and Paturi~\cite{impagliazzo2001complexityofkSAT}. 

\begin{proposition}[restate=propLB]\label{prop:lb}
Unless \textup{ETH} fails, there is no algorithm that solves \textup{IC} on a graph of order $n$ and size $m$ in $2^{o(n+m)}$ time.
\end{proposition}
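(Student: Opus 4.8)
The plan is to combine a linear-size hardness reduction with the Sparsification Lemma. Recall first the standard consequence of the Sparsification Lemma that ETH implies that $3$-SAT on instances with $N$ variables and $M$ clauses admits no algorithm of running time $2^{o(N+M)}$; here one restricts attention to instances with $M=\calO(N)$, on which $2^{o(N)}$ and $2^{o(N+M)}$ coincide. It therefore suffices to exhibit a polynomial-time many-one reduction from $3$-SAT to IC mapping a formula with $N$ variables and $M$ clauses to a graph $G$ with $|V(G)|+|E(G)|=\calO(N+M)$ such that $G$ has an independent cut if and only if the formula is satisfiable: composing such a reduction with a hypothetical algorithm that solves IC on a graph of order $n$ and size $m$ in time $2^{o(n+m)}$ would then decide $3$-SAT in time $2^{o(N+M)}$, contradicting ETH.

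Such a reduction is essentially already contained in the \NP-completeness proofs showing that the existence of an independent cut (equivalently, of a stable cutset) remains hard on $5$-regular line graphs of bipartite graphs~\cite{le2001onstablecutsets} and on $K_4$-free planar graphs of maximum degree~$5$~\cite{le2008onstablecutsets}. The plan is to revisit one of these constructions --- say that of~\cite{le2001onstablecutsets} --- and to check that it is \emph{linear} in the required sense: starting from $3$-SAT (or any of the usual \NP-hard variants it is linearly interreducible with), each variable and each clause is realized by a gadget of constant size, the gadgets are joined by only $\calO(N+M)$ edges in total, so the output graph has $\calO(N+M)$ vertices, and since it additionally has bounded maximum degree (in fact it is $5$-regular) it also has $\calO(N+M)$ edges; hence $|V(G)|+|E(G)|=\calO(N+M)$. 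Alternatively, to avoid reproving correctness of the cited construction, one can feed it the already-sparse $3$-SAT instances produced by the sparsification step. Either way, no new understanding of independent cuts is needed beyond what the cited hardness proofs already supply.

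Putting the two ingredients together yields the proposition along the lines of the first paragraph. We expect the only real work to be bookkeeping: pinning down the precise source problem of the chosen reduction together with its ETH lower bound, and verifying that the reduction is genuinely linear --- every gadget of size $\calO(1)$, only $\calO(N+M)$ inter-gadget edges --- rather than, say, quadratic in $N$, since published \NP-hardness proofs are usually not written with this accounting in mind. Should the most convenient reference fail to be literally linear, one can instead invoke the \textsc{Matching Cut} problem, for which such an ETH lower bound is known, and pass to IC via the standard edge-subdivision transformation, taking the usual care with the reverse implication.
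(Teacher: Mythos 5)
Your high-level strategy coincides with the paper's: invoke the ETH + Sparsification Lemma bound of $2^{o(N+M)}$ for 3-SAT and compose with a polynomial-time reduction to IC whose output graph has $\calO(N+M)$ vertices and edges. The gap is that you announce the decisive step but do not carry it out. You yourself identify the real content of the proof---``verifying that the reduction is genuinely linear \dots rather than, say, quadratic''---point vaguely at the \NP-completeness constructions of~\cite{le2001onstablecutsets} and~\cite{le2008onstablecutsets}, and stop. A published \NP-hardness proof does not automatically come with this size accounting, and the cited constructions do not start from 3-SAT, so to close the argument you would have to trace an explicit chain of reductions all the way back to 3-SAT and write down linear size bounds for every intermediate instance. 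As it stands, the proposal is a plan for a proof rather than a proof.

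For comparison, the paper chooses a different and fully worked chain: 3-SAT $\to$ \textsc{1-in-3-SAT} $\to$ monotone \textsc{1-in-3-SAT} $\to$ IC, the last step being the construction of Brandst\"adt et al.~\cite{brandstadt2000onstablecutsets}, and it records the explicit linear relations $n'=n+4m$, $m'=3m$, $n''=2n'+3$, $m''=m'+2n'+1$, $|V(G)|=n''+9m''+5$, together with an explicit linear count of $|E(G)|$. That bookkeeping is exactly the missing piece in your write-up. Your fallback via \textsc{Matching Cut} and edge subdivision is also plausible in outline, but it shifts the burden to (i) citing or proving a linear ETH bound for \textsc{Matching Cut} and (ii) checking that the subdivision transformation both preserves linearity and has the claimed equivalence in the reverse direction; neither is spelled out. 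So while your route would likely succeed if executed, the proof is not yet there.
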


The following Lemma~\ref{lem:quasi-cover-dense}
shows that dense graphs 
have small quasi-covering sets 
of solid subgraphs.

\begin{lemma}[restate=LemmaQuasicoverDense]\label{lem:quasi-cover-dense}
Given a graph $G$ with $m$ edges 
such that every two adjacent vertices with a common neighbor 
have at least $\ell$ common neighbors, 
we can determine in polynomial time 
a quasi-covering set ${\cal H}$ 
of solid induced subgraphs of $G$ with 
$|{\cal H}|\leq\frac{2m}{(\ell+1)(\ell+2)}$.
\end{lemma}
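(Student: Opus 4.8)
The plan is to greedily build $\calH$ as a family of pairwise \emph{edge-disjoint} solid induced subgraphs of $G$, each having at least $\binom{\ell+2}{2}=\frac{(\ell+1)(\ell+2)}{2}$ edges. Since $G$ has $m$ edges, this at once gives $|\calH|\le \frac{m}{\binom{\ell+2}{2}}=\frac{2m}{(\ell+1)(\ell+2)}$, so the only further task is to make the greedy process stop exactly when $\calH$ quasi-covers $G$.

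The core ingredient is a growth procedure that turns an arbitrary triangle into a dense solid subgraph. Starting from a triangle $T_0=abc$, set $V_0=\{a,b,c\}$, and while there is an edge $xy$ of $G[V_i]$ lying in a triangle $xyz$ of $G$ with $z\notin V_i$, put $V_{i+1}=V_i\cup\{z\}$; let $V^*$ be the vertex set obtained when this halts. Each step keeps $G[V_i]$ solid: $G[V_0]$ is a triangle; the graph on $V_{i+1}$ with edge set $E(G[V_i])\cup\{xz,yz\}$ is the union of the solid subgraphs $G[V_i]$ and $G[\{x,y,z\}]$, which share the edge $xy$, hence is solid by the union property of solid subgraphs recalled above; and $G[V_{i+1}]$ is a supergraph of it on the same vertex set, hence also solid (a supergraph on the same vertex set of a solid graph is solid, since an independent cut of the supergraph is an independent cut of the subgraph). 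Thus $G[V^*]$ is a solid induced subgraph of $G$ containing $a,b,c$.

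I expect the heart of the argument to be the density estimate $|E(G[V^*])|\ge\binom{\ell+2}{2}$. Two observations drive it. First, every vertex of $V^*$ lies in a triangle of $G[V^*]$: this holds for $a,b,c$, and every later vertex $z$ lies in the triangle $xyz\subseteq G[V^*]$. Second, when the procedure halts, $V^*$ is triangle-closed in the sense that $N_G(p)\cap N_G(q)\subseteq V^*$ for every edge $pq$ of $G[V^*]$ that lies in a triangle of $G$. Now fix $v\in V^*$ and a triangle $vxy$ of $G[V^*]$; the edge $vx$ lies in a triangle of $G$, so $|N_G(v)\cap N_G(x)|\ge\ell$ by hypothesis, and all these common neighbours lie in $V^*$, which together with $x$ gives $\deg_{G[V^*]}(v)\ge\ell+1$. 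Applying this to the edge $ab$ also shows $|V^*|\ge\ell+2$, and therefore
\[
2\,|E(G[V^*])|=\sum_{v\in V^*}\deg_{G[V^*]}(v)\ge(\ell+1)\,|V^*|\ge(\ell+1)(\ell+2),
\]
which is the claimed bound.

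It remains to assemble the family. While $\calH$ does not quasi-cover $G$, pick a triangle $T_0=abc$ with at most one vertex in $V(\calH)$ (so some edge, say $ab$, of $T_0$ satisfies $a,b\notin V(\calH)$), run the growth procedure from $T_0$ to get $V^*$, then repeatedly enlarge $V^*$ to a set $B$ by absorbing the vertex set of any member of $\calH$ that shares an edge with $G[B]$, deleting those members, and finally add $G[B]$ to $\calH$. By the union property $G[B]$ is again a solid induced subgraph, it still has at least $\binom{\ell+2}{2}$ edges, and $\calH$ stays pairwise edge-disjoint. Each round permanently adds $a$ and $b$ to $V(\calH)$ without ever removing a vertex from it, so there are at most $n/2$ rounds and the whole computation is polynomial; when it stops, $\calH$ quasi-covers $G$, is pairwise edge-disjoint, and has all members of size at least $\binom{\ell+2}{2}$, which finishes the proof. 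The one delicate point, besides the density estimate, is the bookkeeping in the absorption step: passing to induced subgraphs can create new edges, so the absorption must be iterated until $G[B]$ shares no edge with any surviving member, and one must check that every such merge indeed falls under the union property.
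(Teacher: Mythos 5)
Your proof is correct, and the final counting argument is genuinely different from the paper's. You bound $|{\cal H}|$ by maintaining a family of pairwise edge-disjoint solid induced subgraphs, each with at least $\binom{\ell+2}{2}$ edges (derived from minimum degree $\ge \ell+1$ and order $\ge \ell+2$ within each piece), so edge-disjointness alone gives $|{\cal H}|\le 2m/((\ell+1)(\ell+2))$. The paper instead starts from the collection of \emph{all} triangles of $G$, merges any two pieces that share an edge until stable, and then double-counts vertex memberships: it shows each piece containing $u$ picks out a distinct non-trivial component of $G[N_G(u)]$, of which there are at most $d_G(u)/(\ell+1)$, and combines this with $|V(H)|\ge \ell+2$ to get the same bound. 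Your grow-from-a-seed-triangle-and-absorb procedure and the paper's merge-all-triangles procedure both rely on the same two structural facts (the solid-union property, and that passing to a supergraph on the same vertex set preserves solidity), and in fact the paper's final family is also pairwise edge-disjoint with minimum degree $\ge \ell+1$ inside each piece, so the paper could have closed with your edge-disjointness count as well; conversely your construction could be analyzed by the paper's double counting. Your accounting is arguably the more transparent of the two, while the paper's ``start from all triangles, merge'' construction is slightly easier to state; either is fine. The one place your write-up is a bit terse is exactly the spot you flag yourself --- the absorption must be iterated to a fixed point, since $G[B]$ can acquire new induced edges as $B$ grows --- but the argument you sketch there is sound: each absorption strictly removes a member of ${\cal H}$, so the inner loop terminates in polynomially many steps, and at termination $G[B]$ is edge-disjoint from every survivor.
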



\noindent
As a consequence of 
Lemma \ref{lem:quasi-cover-dense},
we obtain a fixed-parameter tractable algorithm for graphs of large minimum degree.

\begin{corollary}[restate=CorollaryAlgorithmDense]\label{cor:algo-dense}
Given a graph $G$ of order $n$ and minimum degree $\delta \geq \beta n$ 
for some $\beta > \frac{1}{2}$,
we can decide \textup{IC} on $G$ 
in $2^\frac{1}{(2\beta-1)^2}n^{\calO(1)}$ time.
In particular,
\textup{IC}
is fixed-parameter tractable under the parameter $\beta$.
\end{corollary}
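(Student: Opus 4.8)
The plan is to combine the standard non-triviality reduction with Lemma~\ref{lem:quasi-cover-dense} and Theorem~\ref{thm:quasi-cover-algo}, so that Corollary~\ref{cor:algo-dense} falls out as a direct consequence. First I would apply the reduction described in the introduction: if $G$ is not $2$-connected, or if some vertex of $G$ lies in no triangle, then $G$ has an independent cut and we answer ``yes''; otherwise $G$ is a non-trivial instance. (In our regime $\delta \geq \beta n$ with $\beta > \tfrac12$, a graph of order larger than some absolute constant is automatically $2$-connected with every vertex in a triangle, so these first cases only concern graphs of bounded order, which can be decided by brute force; but invoking the general reduction is the cleanest way to guarantee the hypothesis of Theorem~\ref{thm:quasi-cover-algo}.)

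Next I would check the hypothesis of Lemma~\ref{lem:quasi-cover-dense}. For any two vertices $u,v$ of $G$ we have $|N_G(u)\cap N_G(v)| = |N_G(u)| + |N_G(v)| - |N_G(u)\cup N_G(v)| \geq 2\delta - n \geq (2\beta-1)n$, so with $\ell := \lceil (2\beta-1)n\rceil$ every two adjacent vertices having a common neighbor in fact have at least $\ell$ common neighbors. Lemma~\ref{lem:quasi-cover-dense} then produces in polynomial time a quasi-covering set $\calH$ of solid induced subgraphs of $G$ with
\[
|\calH| \;\leq\; \frac{2m}{(\ell+1)(\ell+2)} \;\leq\; \frac{2\binom{n}{2}}{((2\beta-1)n)^2} \;=\; \frac{n-1}{(2\beta-1)^2\,n} \;<\; \frac{1}{(2\beta-1)^2},
\]
where I used $m \leq \binom{n}{2}$ and $(\ell+1)(\ell+2) \geq \ell^2 \geq ((2\beta-1)n)^2$. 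Since solid induced subgraphs of $G$ are in particular solid subgraphs of $G$, I would finally hand the non-trivial instance $G$ together with $\calH$ to Theorem~\ref{thm:quasi-cover-algo}, which decides IC on $G$ in $2^{|\calH|}\,n^{\calO(1)} \leq 2^{1/(2\beta-1)^2}\,n^{\calO(1)}$ time. As the exponent $\frac{1}{(2\beta-1)^2}$ depends only on $\beta$, this runtime has the form $f(\beta)\cdot n^{\calO(1)}$, which is precisely fixed-parameter tractability in the parameter $\beta$.

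I do not expect a genuine obstacle here, as the statement is a true corollary of the preceding results; the only points demanding care are (i) selecting $\ell$ so that the common-neighbor hypothesis of Lemma~\ref{lem:quasi-cover-dense} is satisfied (via $2\delta - n \geq (2\beta-1)n$, valid since $\delta \geq \beta n$), (ii) the elementary estimate $m \leq \binom{n}{2}$ that collapses the size bound to the clean constant $\tfrac{1}{(2\beta-1)^2}$, and (iii), for full rigour in applying Theorem~\ref{thm:quasi-cover-algo}, dealing with the non-trivial case and with graphs of bounded order up front. Note also that $\beta > \tfrac12$ ensures $2\beta-1 > 0$, so the bound is well defined.
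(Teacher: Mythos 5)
Your proposal is correct and follows essentially the same route as the paper: reduce to a non-trivial instance, lower-bound the number of common neighbors of adjacent vertices by $2\delta - n \geq (2\beta-1)n$, invoke Lemma~\ref{lem:quasi-cover-dense} to get a quasi-cover of size less than $\frac{1}{(2\beta-1)^2}$, and then apply Theorem~\ref{thm:quasi-cover-algo}. The only cosmetic differences are your choice of the concrete value $\ell = \lceil (2\beta-1)n\rceil$ and the use of $m \leq \binom{n}{2}$ in place of the paper's $m < n^2/2$, which lead to the same clean bound.
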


\section{Proofs}\label{sec:proofs}
We start immediately with a proof of Theorem~\ref{thm:quasi-cover-algo}.

\TheoremQuasicoverAlgorithm*
\begin{proof}
We begin with some observations and a claim.

Assume that $G$ has an independent cut $S^*$.
By the properties of $G$ and $\cal H$,
there is a partition $(\calH_L,\calH_R)$ 
of $\cal H$ into two non-empty disjoint sets $\calH_L$ and $\calH_R$ that \emph{respects} $G-S^*$, 
meaning that there is no path in $G-S^*$
between a vertex in $V(\calH_L) \setminus S^*$ and a vertex in $V(\calH_R) \setminus S^*$.
Note that the sets 
$V(\calH_L) \setminus S^*$
and 
$V(\calH_R) \setminus S^*$
are both non-empty
and that $V(\calH_L) \cap V(\calH_R)$ 
is a subset of $S^*$,
which implies that $V(\calH_L) \cap V(\calH_R)$ is independent.

Every triangle of $G$ that
is not contained in $V(\calH)$
has exactly one vertex in $V(G) \setminus V(\calH)$ 
and is of one of two types:
Either it has two vertices in $V(\calH_L)$ or in $V(\calH_R)$, respectively, or it has one vertex in $V(\calH_L) \setminus V(\calH_R)$ and one vertex in $V(\calH_R) \setminus V(\calH_L)$; see Figure~\ref{fig:triangles}.
The following claim allows to extend $\calH$ and its partition $(\calH_L,\calH_R)$ with triangles of the first type.

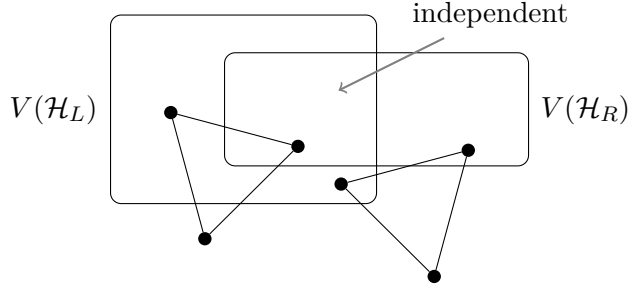
\begin{figure}
\centering
\begin{tikzpicture}
\draw[rounded corners] (-2.5,-1.5) rectangle (1,1);
\node at (-3.25,-0.25) {$V(\calH_L)$};
\draw[rounded corners] (-1,-1) rectangle (3,0.5);
\node at (3.75,-0.25) {$V(\calH_R)$};
\foreach \i in {1,2,3}{
	\node[dot,shift={(-1,-1)}] (u\i) at (120*\i+15:1) {};
}
\draw (u1) to (u2) to (u3) to (u1);
\foreach \i in {1,2,3}{
	\node[dot,shift={(1.5,-1.5)}] (v\i) at (120*\i+45:1) {};
}
\draw (v1) to (v2) to (v3) to (v1);
\node (ind) at (2.5,1) {independent};
\draw[->,thick,gray] (ind) to (0.5,0);
\end{tikzpicture}
\caption{A triangle of $G$ having exactly one vertex in $V(G) \setminus V(\calH)$ of each type.}\label{fig:triangles}
\end{figure}

\begin{claim}\label{claim:extend}
Let $C$ be a triangle of $G$ having exactly one vertex in $V(G) \setminus V(\calH)$.
\begin{enumerate}[(i)]
\item If $(\calH_L,\calH_R)$ respects $G-S^*$ and $C$ has two vertices in $V(\calH_L)$, then $(\calH_L \cup \{C\},\calH_R)$ respects $G-S^*$.
\item If $(\calH_L,\calH_R)$ respects $G-S^*$ and $C$ has two vertices in $V(\calH_R)$, then $(\calH_L,\calH_R \cup \{C\})$ respects $G-S^*$.
\end{enumerate}
\end{claim}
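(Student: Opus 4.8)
The plan is to prove (i) directly; part (ii) then follows by exchanging the roles of $\calH_L$ and $\calH_R$. So suppose $(\calH_L,\calH_R)$ respects $G-S^*$ and $C$ has two vertices $a,b$ in $V(\calH_L)$, and let $v$ be the remaining vertex of $C$, so $v\in V(G)\setminus V(\calH)$. Since $v\notin V(\calH)$ we have $C\notin\calH\supseteq\calH_R$, hence $(\calH_L\cup\{C\},\calH_R)$ is again a partition of $\calH\cup\{C\}$ into two non-empty disjoint parts, and it remains only to check the path condition: no path in $G-S^*$ joins a vertex of $V(\calH_L\cup\{C\})\setminus S^*$ to a vertex of $V(\calH_R)\setminus S^*$. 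Here $V(\calH_L\cup\{C\})\setminus S^*=\bigl(V(\calH_L)\setminus S^*\bigr)\cup\bigl(\{v\}\setminus S^*\bigr)$.

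The main ingredient is that a triangle is solid, together with the fact recorded in the introduction that $H-S$ is connected with at least two vertices whenever $S$ is an independent set contained in $V(H)$ for a solid subgraph $H$. Applying this to $H=C$ and $S=S^*\cap V(C)$ (which is independent, being a subset of $S^*$), the graph $C-S^*$ is connected and has at least two vertices. If $v\in S^*$, then $V(\calH_L\cup\{C\})\setminus S^*=V(\calH_L)\setminus S^*$ and the required condition is precisely the hypothesis that $(\calH_L,\calH_R)$ respects $G-S^*$. If $v\notin S^*$, then $|V(C)\setminus S^*|\geq 2$ forces at least one of $a,b$ to avoid $S^*$, say $a\notin S^*$; since $C-S^*$ is connected, it contains an $a$--$v$ path, which is a path in $G-S^*$. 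Hence $v$ lies in the same component of $G-S^*$ as $a\in V(\calH_L)\setminus S^*$, and since $(\calH_L,\calH_R)$ respects $G-S^*$, that component contains no vertex of $V(\calH_R)\setminus S^*$; a fortiori, no vertex of $V(\calH_R)\setminus S^*$ is reachable from $v$ in $G-S^*$. Combined with the hypothesis applied to the vertices of $V(\calH_L)\setminus S^*$, this gives the path condition, so $(\calH_L\cup\{C\},\calH_R)$ respects $G-S^*$.

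I do not anticipate a genuine obstacle: the only points needing care are the case distinction on whether the new vertex $v$ lies in $S^*$, and the transitivity of ``lying in the same component of $G-S^*$'', which is what ensures that gluing the connected set $V(C)\setminus S^*$ onto the $\calH_L$-side through $a$ cannot create a new connection to the $\calH_R$-side.
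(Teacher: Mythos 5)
Your proof is correct and follows essentially the same line as the paper's: at least one of the two $\calH_L$-vertices of $C$ avoids $S^*$, and $V(C)\setminus S^*$ is connected, so $V(C)\setminus S^*$ lies in the same component of $G-S^*$ as $V(\calH_L)\setminus S^*$, which by hypothesis avoids $V(\calH_R)\setminus S^*$. The paper phrases this in two sentences without the explicit case split on $v\in S^*$ (that case is absorbed into the observation that the surviving part of $C$ is attached to $\calH_L$), but the argument is the same.
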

\begin{proof}
By symmetry, it suffices to prove (i).
If the triangle $C$ has two vertices in $V(\calH_L)$, one of these vertices cannot belong to the independent cut $S^*$.
Consequently, $V(C) \setminus S^*$ belongs to a component of $G-S^*$ that intersects 
$V(\calH_L) \setminus S^*$,
which completes the proof.
\end{proof}

We continue with a description of the algorithm.
It guesses a partition of $\calH$ into two non-empty disjoint sets $\calH_L$ and $\calH_R$.
If $I = V(\calH_L) \cap V(\calH_R)$ is not independent in $G$, we reject the guessed partition, 
because there is no independent cut $S^*$ of $G$ such that $(\calH_L,\calH_R)$ respects $G-S^*$.
Hence, suppose now that $I$ is independent.
For every vertex $v \in V(G) \setminus V(\calH)$, 
we fix a triangle of $G$ containing $v$, which is possible since $G$ is a non-trivial instance of IC.
Let $\calH'$ be the set of these triangles, and let $\calH'_L$ and $\calH'_R$ denote the triangles 
in $\calH'$ satisfying the requirements of Claim~\ref{claim:extend}~(i) and Claim~\ref{claim:extend}~(ii), respectively.
Let $\calL = \calH_L \cup \calH'_L$, $\calR = \calH_R \cup \calH'_R$, and $\calM = \calH' \setminus (\calH'_L \cup \calH'_R)$.
Repeated application of Claim~\ref{claim:extend} shows that, if $S^*$ is an independent cut of $G$, then $(\calH_L,\calH_R)$ respects $G-S^*$ if and only if $(\calL,\calR)$ respects $G-S^*$.
Note that 
\begin{itemize}
\item $\calL \cup \calR \cup \calM$ covers $G$, 
\item $\calL \cup \calR$ quasi-covers $G$, and \item $I = V(\calL) \cap V(\calR)$.
\end{itemize}
Let $L = V(\calL) \setminus V(\calR)$, $R = V(\calR) \setminus V(\calL)$, and $M = V(\calM) \setminus (V(\calL) \cup V(\calR))$.
Also note that every triangle in $\calM$ has exactly one vertex $u$ in $L$, 
exactly one vertex $v$ in $R$, and exactly one vertex $w$ in $M$.
In particular, this implies that, if $(\calL,\calR)$ respects $G-S^*$, where $S^*$ is an independent cut of $G$, then either $u$ or $v$ is in $S^*$, and, consequently, 
$w$ is not in $S^*$.

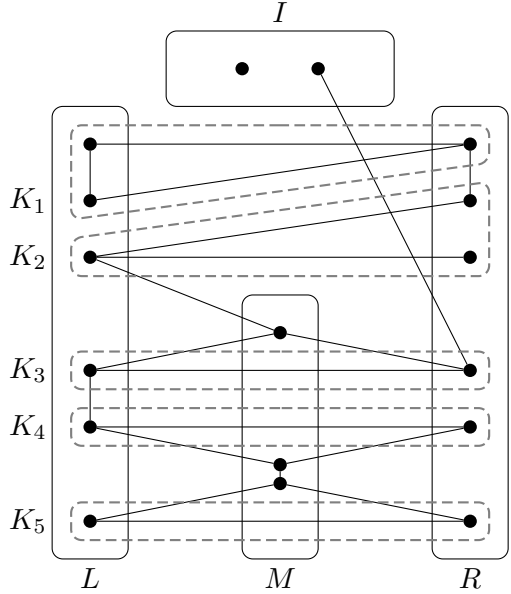
\begin{figure}
\centering
\begin{tikzpicture}
\draw[rounded corners] (-3,-3) rectangle (-2,3);
\node at (-2.5,-3.25) {$L$};
\draw[rounded corners] (3,-3) rectangle (2,3);
\node at (2.5,-3.25) {$R$};
\draw[rounded corners] (-1.5,3) rectangle (1.5,4);
\node at (0,4.25) {$I$};
\draw[rounded corners] (-0.5,0.5) rectangle (0.5,-3);
\node at (0,-3.25) {$M$};

\node[dot] (i1) at (-0.5,3.5) {};
\node[dot] (i2) at (0.5,3.5) {};
\node[dot] (m1) at (0,0) {};
\node[dot] (m2) at (0,-1.75) {};
\node[dot] (m3) at (0,-2) {};
\node[dot,label={[label distance=3.5mm]left:$K_3$}] (l1) at (-2.5,-0.5) {};
\node[dot,label={[label distance=3.5mm]left:$K_4$}] (l2) at (-2.5,-1.25) {};
\node[dot,label={[label distance=3.5mm]left:$K_5$}] (l3) at (-2.5,-2.5) {};
\node[dot] (r1) at (2.5,-0.5) {};
\node[dot] (r2) at (2.5,-1.25) {};
\node[dot] (r3) at (2.5,-2.5) {};

\draw (m1) to (l1) to (r1) to (m1);
\draw (m2) to (l2) to (r2) to (m2);
\draw (m3) to (l3) to (r3) to (m3);
\draw (m3) to (m2);

\node[dot] (L1) at (-2.5,2.5) {};
\node[dot,label={[label distance=3.5mm]left:$K_1$}] (L2) at (-2.5,1.75) {};
\node[dot,label={[label distance=3.5mm]left:$K_2$}] (L3) at (-2.5,1) {};
\node[dot] (R1) at (2.5,2.5) {};
\node[dot] (R2) at (2.5,1.75) {};
\node[dot] (R3) at (2.5,1) {};

\draw (L1) to (R1) to (L2) to (L1);
\draw (R3) to (L3) to (R2) to (R1);
\draw (L3) to (m1);
\draw (r1) to (i2);
\draw (l1) to (l2);

\draw[rounded corners,dash pattern=on 4pt off 2pt,gray,thick] (0,2.75) to (-2.75,2.75) to (-2.75,1.5) to (2.75,2.25) to (2.75,2.75) to (0,2.75);
\draw[rounded corners,dash pattern=on 4pt off 2pt,gray,thick] (0,0.75) to (-2.75,0.75) to (-2.75,1.25) to (2.75,2) to (2.75,0.75) to (0,0.75);
\draw[rounded corners,dash pattern=on 4pt off 2pt,gray,thick] (0,-0.25) to (-2.75,-0.25) to (-2.75,-0.75) to (2.75,-0.75) to (2.75,-0.25) to (0,-0.25);
\draw[rounded corners,dash pattern=on 4pt off 2pt,gray,thick] (0,-1) to (-2.75,-1) to (-2.75,-1.5) to (2.75,-1.5) to (2.75,-1) to (0,-1);
\draw[rounded corners,dash pattern=on 4pt off 2pt,gray,thick] (0,-2.25) to (-2.75,-2.25) to (-2.75,-2.75) to (2.75,-2.75) to (2.75,-2.25) to (0,-2.25);
\end{tikzpicture}
\caption{A schematic example of $G'$. The corresponding 2-SAT formula is\\$\calF = (x_1) \wedge (\bar{x}_1 \vee \bar{x}_2) \wedge (\bar{x}_2 \vee x_3) \wedge (x_3 \vee x_4) \wedge (\bar{x}_3) \wedge (\bar{x}_4 \vee x_5)$.}
\end{figure}

The following construction of the bipartite graph $G'$ and the 2-SAT formula $\calF$ is based on a construction by Rauch et al.~\cite{rauch2025exactandparam}.
Let $G'$ be the bipartite graph with
\begin{align*}
V(G') = L \cup R
\quad\text{and}\quad
E(G') = \left\{ uv \in E(G): u \in L\mbox{ and }v \in R \right\}.
\end{align*}
Furthermore, let $K_1, \dots, K_t$ be the vertex sets of the components of $G'$, and let $L_i = L \cap K_i$ and $R_i = R \cap K_i$.
For every $i \in [t]$, we introduce a Boolean variable $x_i$.
Let $\calF$ be the 2-SAT formula with variables $x_1,\dots,x_t$ containing the following clauses:
\begin{enumerate}[(i)]
\item For every $i \in [t]$ for which a vertex of $L_i$ has a neighbor in $I$, we add the clause $(x_i)$.\label{constr-i}
\item For every $i \in [t]$ for which a vertex of $R_i$ has a neighbor in $I$, we add the clause $(\bar{x}_i)$.\label{constr-ii}
\item For every (not necessarily distinct) $i,j \in [t]$ for which there is an edge $uv \in E(G)$ with $u \in L_i$ and $v \in L_j$, we add the clause $(x_i \vee x_j)$.\label{constr-iii}
\item For every (not necessarily distinct) $i,j \in [t]$ for which there is an edge $uv \in E(G)$ with $u \in R_i$ and $v \in R_j$, we add the clause $(\bar{x}_i \vee \bar{x}_j)$.\label{constr-iv}
\item For every component $K$ of $G[M]$ and for every $i,j \in [t]$ with $i \neq j$ for which $K$ has both a neighbor in $L_i$ and a neighbor in $R_j$, we add the clause $(\bar{x}_i \vee x_j)$.\label{constr-v}
\end{enumerate}
Note that we allow $i=j$ in \ref{constr-iii} and \ref{constr-iv}, 
in which case the 2-clause simplifies to a unit clause.
As the following claim shows,
the satisfiability of $\calF$ is linked to the existence 
of a specific independent cut of $G$.

\begin{claim}\label{claim:sat}
The \textup{2-SAT} formula $\calF$ is satisfiable if and only if there is an independent cut $S^*$ of $G$ such that $(\calL,\calR)$ respects $G-S^*$.
\end{claim}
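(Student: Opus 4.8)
The plan is to prove the two directions of the equivalence separately, after first establishing three structural facts about any independent cut $S^*$ that is respected by $(\calL,\calR)$. First, $I\subseteq S^*$, since a vertex of $I\setminus S^*$ would lie in both $V(\calL)\setminus S^*$ and $V(\calR)\setminus S^*$. Second, $M\cap S^*=\emptyset$, which is essentially recorded above already: for each triangle of $\calM$ exactly one of its $L$- and $R$-vertex lies in $S^*$, which forces its $M$-vertex out of $S^*$. Third, and most importantly, for every $i\in[t]$ either $L_i\subseteq S^*$ or $R_i\subseteq S^*$: here I would use that every edge of $G$ with one endpoint in $L$ and one in $R$ is an edge of $G'$, hence has \emph{exactly} one endpoint in $S^*$ (at most one by independence, and at least one, for otherwise the edge is a length-one path in $G-S^*$ between the two sides), and then run this rule edge by edge along an alternating path of the connected bipartite graph $G'[K_i]$ from a hypothetical survivor of $L_i$ to a hypothetical survivor of $R_i$ to get a parity contradiction. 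In particular, if $K_i$ has an edge then exactly one of $L_i,R_i$ is contained in $S^*$, and otherwise $K_i$ consists of a single vertex of $G'$.

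For sufficiency, given a satisfying assignment $\phi$ of $\calF$, I would take $S^*=I\cup\bigcup_{\phi(x_i)=1}R_i\cup\bigcup_{\phi(x_i)=0}L_i$. The only edges of $G$ that could have both endpoints in $S^*$ lie inside $L$, inside $R$, or between $I$ and $L\cup R$ — edges meeting $M$ are harmless, there are no edges inside $I$, and an $L$--$R$ edge lies inside one $K_i$, where $\phi$ leaves only one of $L_i,R_i$ in $S^*$ — and each such potentially bad edge contributed a clause of type \ref{constr-i}--\ref{constr-iv} which, being satisfied by $\phi$, forbids both of its endpoints from lying in $S^*$; hence $S^*$ is independent. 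Since $I\subseteq S^*$ we have $V(\calL)\setminus S^*=L\setminus S^*$ and $V(\calR)\setminus S^*=R\setminus S^*$, and a path in $G-S^*$ joining these sets uses no $L$--$R$ edge (none survives, by the choice of $\phi$); letting $u\in L_i$ be its last vertex in $L\setminus S^*$, the vertex after $u$ lies in $M$, and the stretch of the path from $u$ to the first subsequent vertex $v\in R_j$ in $R\setminus S^*$ runs through a single component $K$ of $G[M]$ adjacent to $L_i$ and to $R_j$; since $\phi(x_i)=1$ and $\phi(x_j)=0$ (because $u$ and $v$ survive), this forces $i\neq j$ and falsifies the clause \ref{constr-v} coming from $K$, a contradiction, so $(\calL,\calR)$ respects $G-S^*$. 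Finally $S^*$ is a cut, since if $L\setminus S^*=\emptyset$ then $V(H)\subseteq L\cup I\subseteq S^*$ for any $H$ in the nonempty set $\calH_L\subseteq\calL$, contradicting independence of $S^*$ because a solid graph has an edge; symmetrically $R\setminus S^*\neq\emptyset$, so both sides of the cut are nonempty and separated.

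For necessity, given such an $S^*$, I would define $\phi(x_i)=1$ if $R_i\subseteq S^*$ but $L_i\not\subseteq S^*$, $\phi(x_i)=0$ if $L_i\subseteq S^*$ but $R_i\not\subseteq S^*$, and in the remaining case — where the third fact forces $K_i=\{v\}$ with $v\in S^*$ — put $\phi(x_i)=0$ when $v\in L$ and $\phi(x_i)=1$ when $v\in R$. Clauses of type \ref{constr-i}--\ref{constr-iv} are then immediate, since the edge or incidence that generated the clause has an endpoint outside $S^*$ and the structural facts turn this into the required value of $x_i$. The part I expect to be the main obstacle is a clause $(\bar x_i\vee x_j)$ of type \ref{constr-v}, arising from a component $K$ of $G[M]$ adjacent to $L_i$ and to $R_j$ with $i\neq j$: the key sub-claim is that if $\phi(x_i)=1$ and $a\in L_i$ is adjacent to $K$, then $a\notin S^*$ — trivial if $K_i$ is a single vertex, and otherwise true because $a$ has a neighbour in $R_i\subseteq S^*$, so independence rules out $a\in S^*$ — and symmetrically that if $\phi(x_j)=0$ and $b\in R_j$ is adjacent to $K$, then $b\notin S^*$. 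Granting this, if $\phi(x_i)=1$ and $\phi(x_j)=0$ held together, the survivors $a\in V(\calL)\setminus S^*$ and $b\in V(\calR)\setminus S^*$ would be joined by a path through $K\subseteq G-S^*$, contradicting that $(\calL,\calR)$ respects $G-S^*$, so the clause is satisfied. Beyond this, one must only choose $\phi$ consistently on the degenerate single-vertex components of $G'$ and note that a falsified clause \ref{constr-v} is precisely an illegal $M$-detour between the two sides of the cut; the structural fact that $S^*$ absorbs one side of every component of $G'$, together with the bookkeeping on the sufficiency side, is routine once that fact is at hand.
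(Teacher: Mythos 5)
Your proof is correct and takes essentially the same route as the paper: the identical construction of $S^*$ from a satisfying assignment and of the assignment from $S^*$, the same detour argument through a component of $G[M]$ to falsify a type~\ref{constr-v} clause, and the same structural observations ($I\subseteq S^*$, $M\cap S^*=\emptyset$, and one side of each $K_i$ absorbed by $S^*$). The only difference is presentational: you spell out the alternating-path parity argument for the third structural fact and handle the tie-breaking for single-vertex components of $G'$ explicitly, which the paper compresses into the words ``well-defined.''
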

\begin{proof}
Let $\alpha$ be a satisfying truth assignment of $\calF$.
Since $I$ is an independent set of $G$, $\calH$ is a set of solid subgraphs of $G$, and $(\calH_L,\calH_R)$ is a partition of $\calH$ where both $\calH_L$ and $\calH_R$ are non-empty, both $L$ and $R$ are not independent in $G$.
Therefore, there are $i,j \in [t]$ for which $(x_i \vee x_j)$ is a clause of $\calF$, and there are $i,j \in [t]$ for which $(\bar{x}_i \vee \bar{x}_j)$ is a clause of $\calF$, and, consequently, the assignment $\alpha$ is non-constant. 
We thus have $L \setminus S \neq \emptyset$ and $R \setminus S \neq \emptyset$ for
\[
S = \left(\bigcup_{i \in [t]:\:\alpha(x_i) = \texttt{false}} L_i\right) \cup \left(\bigcup_{i \in [t]:\:\alpha(x_i) = \texttt{true}} R_i\right).
\]
We claim that $S^*=S \cup I$ is an independent cut of $G$.
The clauses added to $\calF$ in \ref{constr-i}-\ref{constr-iv} ensure that $S^*$ is independent in $G$.
Assume, for a contradiction, that there is a path in $G-S^*$ between a vertex in $L_i$ and a vertex in $R_j$ for some $i,j \in [t]$, and that $P$ is a shortest such path.
Since by the construction of $S$, for every $k \in [t]$, 
either $L_k \subseteq S^*$ or $R_k \subseteq S^*$,
we must have $i \neq j$.
In particular, the path $P$ has length at least~2 and the internal vertices of $P$ are contained in a component $K$ of $G[M]$ for which $K$ has both a neighbor in $L_i$ and a neighbor in $R_j$.
Here, the clauses in \ref{constr-v} ensure that $L_i \subseteq S^*$ or $R_j \subseteq S^*$.
Thus, $P$ is not a path in $G-S^*$, a contradiction.
Altogether, it follows that $S^*$ is an independent cut of $G$
and that $(\calL,\calR)$ respects $G-S^*$.

Now, let $S^*$ be an independent cut of $G$ such that $(\calL,\calR)$ respects $G-S^*$.
We claim that, for every $i \in [t]$, either $L_i \subseteq S^*$ or $R_i \subseteq S^*$.
Indeed, otherwise we would have an edge with one end in $L_i$ and the other end in $R_i$ in $G-S^*$ by definition of $K_i$, a contradiction to the stated properties of $S^*$.
Therefore, the truth assignment
\[
\alpha(x_i) = \begin{cases}
\texttt{false} &\text{if $L_i \subseteq S^*$}\\
\texttt{true} &\text{if $R_i \subseteq S^*$}
\end{cases}
\quad \text{for $i \in [t]$},
\]
is well-defined.
We claim that $\alpha$ satisfies $\calF$.
Recall that $I \subseteq S^*$ as discussed before.
Consequently, $\alpha$ satisfies the clauses added to $\calF$ in \ref{constr-i}-\ref{constr-iv}, because $S^*$ is independent in $G$.
Let $i,j \in [t]$ with $i\neq j$ be such that a component $K$ of $G[M]$ has both a neighbor in $L_i$ and a neighbor in $R_j$.
In other words, $(\bar{x}_i \vee x_j)$ is a clause of $\calF$.
Then we must have $L_i \subseteq S^*$ or $R_j \subseteq S^*$.
It follows that $\alpha$ satisfies the clauses added to $\calF$ in \ref{constr-v}, too.
Altogether, $\alpha$ is a satisfying truth assignment of $\calF$.
\end{proof}

According to Claim~\ref{claim:sat}, 
we can decide if there is an independent cut $S^*$ of $G$ such that $(\calL,\calR)$ respects $G-S$ by deciding the satisfiability of the 2-SAT formula $\calF$, which is possible in polynomial time~\cite{aspvall1979lineartimealgorithm}.
Therefore, the overall runtime of the algorithm is $2^{|\calH|}n^{\calO(1)}$.
The correctness of the algorithm follows by the argumentation throughout the proof.
This completes the proof of Theorem~\ref{thm:quasi-cover-algo}.
\end{proof}

We now show how to construct the quasi-covering set of solid subgraphs needed in Theorem~\ref{thm:quasi-cover-algo}.


\LemmaQuasicoverSparse*
\begin{proof}
For a positive integer $p$, let a {\it $p$-windmill} 
be the graph that arises 
from $p$ disjoint copies of $K_2$ 
by adding a universal vertex. 
A graph is a {\it windmill} if it is a $p$-windmill for some $p$.
Note that a $1$-windmill is a triangle, and a $p$-windmill has $p$ triangles and $1+2p$ vertices.
A {\it ($p$-)windmill} in $G$ is a subgraph of $G$ that is a ($p$-)windmill.
A collection $\{ W_1,\ldots,W_k\}$ of vertex-disjoint windmills in $G$
is {\it maximal} if there is no collection $\{ W'_1,\ldots,W'_\ell\}$ 
of vertex-disjoint windmills in $G$ such that 
$W_1\cup\cdots\cup W_k$ is a proper subgraph of 
$W'_1\cup\cdots\cup W'_\ell$.
It is easy to see that every collection of vertex-disjoint windmills in $G$
can greedily be extended in polynomial time 
to some maximal collection of vertex-disjoint windmills in $G$.

Start with ${\cal W}$ 
initialized as any maximal collection of vertex-disjoint windmills in $G$,
and apply the following rule:
\begin{quote}
{\it If some triangle $T$ with exactly one vertex in $V({\cal W})$
intersects some $W$ from ${\cal W}$ in such a way that 
$W\setminus V(T)$ contains a triangle $T'$, 
then 
\begin{itemize}
\item replace $W$ within ${\cal W}$ 
by the two vertex-disjoint triangles $T$ and $T'$,
\item extend ${\cal W}$ 
to some maximal collection of vertex-disjoint windmills in $G$, and
\item denote the resulting collection again by ${\cal W}$.
\end{itemize}
}
\end{quote}
Repeated application of this rule results after polynomially many steps
in a maximal collection of vertex-disjoint windmills ${\cal W}$ in $G$
to which the rule no longer applies.
Let ${\cal W}=\{ W_1,\ldots,W_k\}$
and let $W_i$ be a $p_i$-windmill for $i\in [k]$.
Since $G$ has maximum degree at most $\Delta$,
we have $p_i\leq \left\lfloor\frac{\Delta}{2}\right\rfloor$ for $i\in [k]$.
Let ${\cal H}$ be the collection of 
the $p_1+\cdots+p_k$ triangles in $W_1\cup\cdots\cup W_k$.
It is easy to see that ${\cal H}$ quasi-covers $G$.
Furthermore,
\[
n\geq \sum\limits_{i=1}^k(1+2p_i)
\geq \frac{1+2\left\lfloor\frac{\Delta}{2}\right\rfloor}{\left\lfloor\frac{\Delta}{2}\right\rfloor}\sum\limits_{i=1}^kp_i
=\frac{1+2\left\lfloor\frac{\Delta}{2}\right\rfloor}{\left\lfloor\frac{\Delta}{2}\right\rfloor}
|{\cal H}|,
\]
which implies the desired bound on $|{\cal H}|$.
\end{proof}

We combine the statements of Theorem~\ref{thm:quasi-cover-algo} and Lemma~\ref{lem:quasi-cover-sparse} for a proof of Corollary~\ref{cor:algo}.

\CorollaryAlgorithm*
\begin{proof}
Since instances of IC
that are not non-trivial 
can be recognized in polynomial time
and all have independent cuts,
we may assume that 
$G$ is a non-trivial instance of IC.
Now, Theorem~\ref{thm:quasi-cover-algo} and Lemma~\ref{lem:quasi-cover-sparse} complete the proof.
\end{proof}

For the proof of the asymptotic lower bound of Proposition~\ref{prop:lb} we use Theorem~\ref{thm:eth-3sat}, which is an immediate consequence of the Exponential Time Hypothesis (ETH)~\cite{impagliazzo2001complexityofkSAT} and the Sparsification Lemma~\cite{impagliazzo2001whichproblems}.

\begin{theorem}[\hspace{1sp}\cite{impagliazzo2001complexityofkSAT,impagliazzo2001whichproblems}, see also~\cite{cygan2015parameterizedalgorithms}]\label{thm:eth-3sat}
Unless \textup{ETH} fails, \textup{3-SAT} cannot be solved in $2^{o(n+m)}$ time, where $n$ is the number of variables and $m$ is the number of clauses.
\end{theorem}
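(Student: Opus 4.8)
The plan is to deduce Theorem~\ref{thm:eth-3sat} from the Exponential Time Hypothesis (ETH)~\cite{impagliazzo2001complexityofkSAT} together with the Sparsification Lemma of Impagliazzo, Paturi, and Zane~\cite{impagliazzo2001whichproblems}, using the latter as a black box. Recall that ETH asserts the existence of a constant $\delta_0>0$ such that 3-SAT on $n$ variables cannot be solved in time $2^{\delta_0 n}$, and that the Sparsification Lemma provides, for every $\varepsilon>0$, a constant $C_\varepsilon$ and an algorithm that, given a 3-CNF formula $\varphi$ on $n$ variables, computes in time $2^{\varepsilon n}n^{\calO(1)}$ a list $\psi_1,\dots,\psi_t$ of 3-CNF formulas such that $t\le 2^{\varepsilon n}$, each $\psi_i$ has at most $n$ variables and at most $C_\varepsilon n$ clauses, and $\varphi$ is satisfiable if and only if some $\psi_i$ is satisfiable.

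Suppose, for the sake of contradiction, that there is an algorithm $\mathcal{A}$ that solves 3-SAT on instances with $n$ variables and $m$ clauses in time $f(n+m)$ for some nondecreasing function $f$ with $f(k)=2^{o(k)}$. First I would fix $\varepsilon:=\delta_0/4$ and let $C:=C_\varepsilon$ be the constant supplied by the Sparsification Lemma; this pins $C$ down as a genuine constant. Since $f(k)=2^{o(k)}$, there is a threshold $k_0$ with $f(k)\le 2^{\frac{\delta_0}{4(1+C)}\,k}$ for all $k\ge k_0$. The claimed faster 3-SAT algorithm then proceeds, on a 3-CNF formula $\varphi$ with $n$ variables where $n$ is large enough that $(1+C)n\ge k_0$, by sparsifying $\varphi$ into $\psi_1,\dots,\psi_t$ as above, running $\mathcal{A}$ on each $\psi_i$, and accepting if and only if $\mathcal{A}$ accepts some $\psi_i$; correctness is immediate. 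Each $\psi_i$ has at most $(1+C)n$ variables and clauses in total, so $\mathcal{A}$ spends time at most $f((1+C)n)\le 2^{\frac{\delta_0}{4}n}$ on it, and since $t\le 2^{\varepsilon n}=2^{\delta_0 n/4}$ and the sparsification step costs $2^{\delta_0 n/4}n^{\calO(1)}$, the total running time is $2^{\delta_0 n/4}\cdot 2^{\delta_0 n/4}\cdot n^{\calO(1)}=2^{\delta_0 n/2}n^{\calO(1)}$; instances with small $n$ are handled by brute force in constant time. Hence 3-SAT would be solvable in time $2^{\delta_0 n/2}n^{\calO(1)}\le 2^{\delta_0 n}$ for all large $n$, contradicting ETH.

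I do not expect a real obstacle: the statement is purely about 3-SAT, uses no graph theory, and all the substance is outsourced to the Sparsification Lemma. The only point that needs to be stated with a little care is the order of quantifiers — the sparsification parameter $\varepsilon$, and therefore the constant $C_\varepsilon$, must be chosen before the subexponential bound on $f$ is invoked, so that the exponent $\frac{\delta_0}{4(1+C)}$ used to bound $f$ is a fixed constant rather than something depending on the instance. Once that is noticed, the running-time bookkeeping above is the entire argument.
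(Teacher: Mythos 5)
The paper does not prove Theorem~\ref{thm:eth-3sat} at all: it states it as a known consequence of ETH together with the Sparsification Lemma and cites \cite{impagliazzo2001complexityofkSAT,impagliazzo2001whichproblems,cygan2015parameterizedalgorithms}. Your derivation is the standard textbook argument (essentially the one in the cited Cygan et al.\ reference), and it is correct: you correctly fix the sparsification parameter $\varepsilon$, and hence the constant $C_\varepsilon$, \emph{before} invoking the $2^{o(k)}$ bound on $f$, so the exponent $\frac{\delta_0}{4(1+C)}$ is a bona fide constant, and the running-time bookkeeping $2^{\delta_0 n/4}\cdot 2^{\delta_0 n/4}\cdot n^{\calO(1)}$ yields a $2^{\delta_0 n/2}n^{\calO(1)}$ algorithm contradicting ETH. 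So you supplied a proof where the paper chose to cite; both are legitimate, and yours adds nothing problematic.
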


We are in a position to prove Proposition~\ref{prop:lb}.

\propLB*
\begin{proof}
For the proof, we describe a linear reduction
from 3-SAT to IC 
as the concatenation of
three known linear reductions
\begin{itemize}
\item from 3-SAT to \textsc{1-in-3-SAT},
\item from \textsc{1-in-3-SAT} to \textsc{1-in-3-SAT}, where every occurrence of a literal is positive, and
finally 
\item from \textsc{1-in-3-SAT}, where every occurrence of a literal is positive,
to IC.
\end{itemize}
The first two reductions are due to 
Schaefer~\cite{schaefer1978complexity}
as well as folklore constructions.
The last reduction is due to 
Brandstädt et al.~\cite{brandstadt2000onstablecutsets}.
We repeat these reductions to sufficient detail
in order to verify that they are indeed linear.

Let $R(a,b,c)$ denote the Boolean function that evaluates to \texttt{true} if exactly one of the literals $a$, $b$, or $c$ is \texttt{true} and to \texttt{false} otherwise.
Recall that, in an instance $\calF'$ of \textsc{1-in-3-SAT}, we are given a conjunction of the function $R$ on different literals, and the question is whether there exists a satisfying assignment of $\calF'$.
We refer to the instances of $R$ in an instance $\calF'$ of \textsc{1-in-3-SAT} as clauses of $\calF'$, too.

Let $\mathcal{F}$ be an instance of 3-SAT with $n$ variables and $m$ clauses.
For every clause $(x \vee y \vee z)$ of $\mathcal{F}$, we add the clauses
$R(a,b,\bar{x})$, $R(b,c,y)$, and $R(c,d,\bar{z})$
to a \textsc{1-in-3-SAT} formula $\calF'$, where $a$, $b$, $c$, and $d$ are new variables for every clause of $\calF$, respectively.
It follows that $\calF'$ has $n' = n + 4m$ variables and $m' = 3m$ clauses.
Clearly, we can construct $\calF'$ from $\calF$ in polynomial time.
It is easy to prove that $\calF$ is satisfiable if and only if $\calF'$ is satisfiable.

We further reduce $\calF'$ linearly to an instance of \textsc{1-in-3-SAT}, where every occurrence of a literal is positive.
We may assume that $\calF'$ contains an occurrence of a negative literal; otherwise, we are done.
We introduce three new variables $f_1$, $f_2$, and $t$ with the intention that $\alpha(f_1)=\alpha(f_2)=\texttt{false}$ and $\alpha(t)=\texttt{true}$ for every satisfying assignment $\alpha$.
For every variable $a$ of $\calF'$, we create a new variable $a'$, add the clauses $R(a,a',f_1)$ and $R(a,a',f_2)$, and, for every clause $R(\bar{a},b,c)$ of $\calF'$ in which $a$ occurs negatively, we successively replace $R(\bar{a},b,c)$ by $R(a',b,c)$.
Lastly, we add the clause $R(f_1,f_2,t)$.
Let $\calF''$ denote the thus obtained \textsc{1-in-3-SAT} instance.
Clearly, every literal in every clause of $\calF''$ is positive and we can construct $\calF''$ from $\calF'$ in polynomial time.
Furthermore, $\calF''$ has $n'' = 2n' + 3$ variables and $m'' = m' + 2n' + 1$ clauses, and $\calF'$ is satisfiable if and only if $\calF''$ is satisfiable.

\begin{figure}
\centering
\begin{tikzpicture}
\node[dot,label=left:$a_{i,1}$,shift={(-3,0)}] (a1) at (-120:1) {};
\node[dot,label=below:$a_{i,2}$,shift={(-3,0)}] (a2) at (0:1) {};
\node[dot,label=left:$a_{i,3}$,shift={(-3,0)}] (a3) at (120:1) {};
\node[dot,label=right:$b_{i,3}$,shift={(3,0)}] (b3) at (60:1) {};
\node[dot,label=below:$b_{i,2}$,shift={(3,0)}] (b2) at (120+60:1) {};
\node[dot,label=right:$b_{i,1}$,shift={(3,0)}] (b1) at (240+60:1) {};
\node[dot,label=above:$c_{i,1}$] (c1) at (0,0.866) {};
\node[dot,label=below:$c_{i,2}$] (c2) at (-0.5,0) {};
\node[dot,label=below:$c_{i,3}$] (c3) at (0.5,0) {};
\draw (a1) to (a2) to (a3) to (a1);
\draw (b1) to (b2) to (b3) to (b1);
\draw (c1) to (c2) to (c3) to (c1);
\draw (a1) to (b1);
\draw (a2) to (b2);
\draw (a3) to (b3);
\end{tikzpicture}    
\caption{The graph $G(C_i)$; cf. the proof of Proposition~\ref{prop:lb} and~\cite{brandstadt2000onstablecutsets}.}
\label{fig:GCi}
\end{figure}
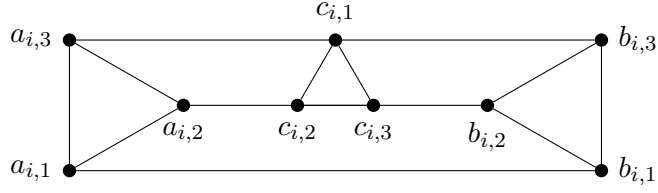

Now, we repeat a linear reduction from \textsc{1-in-3-SAT}, where every occurrence of a variable is positive, to IC by Brandstädt et al.~\cite{brandstadt2000onstablecutsets}.
For every variable $v$ of $\calF''$, we introduce a labeled vertex.
For every clause $C_i = R(c_{i,1},c_{i,2},c_{i,3})$ of $\calF''$, we define the labeled graph $G(C_i)$ as shown in Figure~\ref{fig:GCi}.
Moreover, we consider a triangle $R=r_1r_2r_3r_1$ and an edge $T=t_1t_2$.
We construct a graph $G$ from the vertices $v$ for every variable of $\calF''$, the graphs $G(C_i)$ for every clause $C_i$ of $\calF''$, the graphs $R$ and $T$, and the edges
\begin{itemize}
\item $vc_{i,j}$ if and only if $c_{i,j}$ is the variable $v$ for $i \in [m'']$ and $j \in [3]$,
\item $vr_1$, $vr_2$ for every variable $v$ of $\calF''$,
\item $r_1a_{i,1},r_2b_{i,1},r_3a_{i,1},r_3b_{i,1}$ for $i \in [m'']$, and
\item $t_1c_{i,1},t_1c_{i,2},t_2c_{i,1},t_2c_{i,3}$, for $i \in [m'']$.
\end{itemize}
Brandstädt et al.~\cite{brandstadt2000onstablecutsets} proved that $\calF''$ is satisfiable if and only if $G$ has an independent cut.
The graph $G$ has order $n'' + 9m'' + 5$ and size $14m'' + 4 + 3m'' + 2n'' + 4m'' + 4m''$ and can be constructed from $\calF''$ in polynomial time.

Altogether, the order and size of $G$ are linear in $n$ and $m$, respectively.
At this point, the statement of Proposition~\ref{prop:lb} follows from Theorem~\ref{thm:eth-3sat}.
\end{proof}

We proceed with a construction of a quasi-cover in the ``dense'' case, namely the proof of Lemma~\ref{lem:quasi-cover-dense}.

\LemmaQuasicoverDense*
\begin{proof}
For a vertex $u$ of $G$, 
let $t(u)$ be the number of non-trivial components of $G[N_G(u)]$.
If $v$ belongs to some non-trivial component $C$ of $G[N_G(u)]$,
then $u$ and $v$ have a common neighbor
and all at least $\ell$ common neighbors of $u$ and $v$ belong to $C$.
This implies that $C$ has order at least $\ell+1$, 
which implies 
\begin{eqnarray}\label{eorderh0}
t(u)&\leq &\frac{d_G(u)}{\ell+1}.
\end{eqnarray}
Start with ${\cal H}$ initialized as the set of all triangles in $G$,
and apply the following reduction rule to ${\cal H}$ as long as possible:
\begin{quote}
{\it If $H$ and $H'$ are distinct subgraphs in ${\cal H}$ 
and $H\cap H'$ contains an edge, then replace $H$ and $H'$
within ${\cal H}$ by their union, that is,
${\cal H}\longleftarrow \left({\cal H}\setminus \{ H,H'\}\right)\cup \{ H\cup H'\}$.}
\end{quote}
Now, let ${\cal H}$ denote the resulting set, 
once the reduction rule no longer applies.

The following properties are easy to see:
\begin{itemize}
\item Every graph in ${\cal H}$ is a solid induced subgraph of $G$.
\item Every triangle of $G$ belongs to some graph from ${\cal H}$, 
that is, ${\cal H}$ quasi-covers $G$.
\item ${\cal H}$ can be determined in polynomial time.
\item For every graph $H$ in ${\cal H}$ and every vertex $u$ of $H$,
some triangle in $H$ contains $u$.
\end{itemize}

Now, let $H$ be a graph in ${\cal H}$ and let $u$ be a vertex of $H$.

Let $uvw$ be a triangle of $H$ that contains $u$.
Let $C$ be the component of $G[N_G(u)]$ that contains the edge $vw$.
We claim that $V(C)\subseteq V(H)$.
In fact, suppose, for a contradiction, that this is not the case.
Then, since $C$ is a component of $G[N_G(u)]$, 
there are two adjacent vertices $x$ and $y$ in $C$ 
such that $x$ belongs to $H$ but $y$ does not.
Now, the triangle $uxy$ belongs to some graph $H'$ in ${\cal H}$
that is distinct from $H$, and the reduction rule could be applied
to $H$ and $H'$, which is a contradiction.
Hence, we obtain $V(C)\subseteq V(H)$ as claimed,
which implies 
\begin{eqnarray}\label{eorderh1}
n(H) & \geq & 1+|C|\geq \ell+2.
\end{eqnarray}
If $H$ and $H'$ are two distinct graphs in ${\cal H}$
with $\{ u\}\cup V(C)\subseteq V(H),V(H')$,
then the reduction rule could be applied
to $H$ and $H'$, which is a contradiction.
Hence, there is exactly one subgraph $H$ in ${\cal H}$
that contains $\{ u\}\cup V(C)$, which implies that 
\begin{eqnarray}\label{eorderh2}
\mbox{\it ${\cal H}$ contains at most $t(u)$ graphs that contain $u$.}
\end{eqnarray}
Now, double counting implies
\begin{eqnarray*}
|{\cal H}| 
&=&
\sum_{H\in {\cal H}}1\\
&\stackrel{(\ref{eorderh1})}{\leq}&
\frac{1}{\ell+2}\sum_{H\in {\cal H}}n(H)\\
&=&
\frac{1}{\ell+2}\sum_{u\in V(G)}|\{ H\in {\cal H}:u\in V(H)\}|\\
&\stackrel{(\ref{eorderh2})}{\leq}&
\frac{1}{\ell+2}\sum_{u\in V(G)}t(u)\\
&\stackrel{(\ref{eorderh0})}{\leq}&
\frac{1}{(\ell+1)(\ell+2)}\sum_{u\in V(G)}d_G(u)\\
&=& 
\frac{2m}{(\ell+1)(\ell+2)}.
\end{eqnarray*}
\end{proof}



Corollary~\ref{cor:algo-dense} 
is a consequence of 
Theorem~\ref{thm:quasi-cover-algo} and
Lemma~\ref{lem:quasi-cover-dense}. 

\CorollaryAlgorithmDense*
\begin{proof}
As in the proof of Corollary~\ref{cor:algo}, we may assume that $G$ 
is a non-trivial instance of IC.
If the vertices $u$ and $v$ are adjacent in $G$,
then $v$ has at least 
$\ell:=\delta-|V(G)\setminus N_G(u)|
\geq \delta-(n-\delta)
\geq (2\beta-1)n$
common neighbors with $u$.
Since $G$ has less than $n^2/2$ edges,
the quasi-cover ${\cal H}$ of $G$
from Lemma \ref{lem:quasi-cover-dense} 
satisfies $|{\cal H}|<\frac{n^2}{\ell^2}\leq \frac{1}{(2\beta-1)^2}$.
Now, the statement follows readily from Theorem~\ref{thm:quasi-cover-algo}.
\end{proof}

\end{document}